\theoremstyle{plain}
\newtheorem{theorem}{Theorem}[section]
\newtheorem{lemma}[theorem]{Lemma}
\newtheorem{corollary}[theorem]{Corollary}
\newtheorem{definition}[theorem]{Definition}
\theoremstyle{remark}
\newtheorem*{remark}{Remark}
\newcommand{\E}{\mathbb{E}}
\newcommand{\NP}{\mathsf{NP}}
\newcommand{\fix}{\mathrm{fix}}
\newcommand{\orb}{\mathrm{orb}}
\newcommand{\mov}{\mathrm{move}}
\renewcommand{\angle}[1]{\mbox{$\langle #1 \rangle$}}
\title{The Parameterized Complexity of some Permutation Group Problems}
\author{V. Arvind\\
        The Institute of Mathematical Sciences\\
        C.I.T. Campus\\
        Chennai 600 113, India\\
        arvind@imsc.res.in\\}\date{}
\begin{document}
\maketitle{}

\begin{abstract}
In this paper we study the parameterized complexity of two well-known
permutation group problems which are NP-complete. 

\begin{itemize}
\item Given a permutation group $G=\angle{S}\le S_n$ and a parameter
  $k$, find a permutation $\pi\in G$ such that $|\{i\in [n]\mid
  \pi(i)\ne i\}|\ge k$. This generalizes the $\NP$-complete problem of
  finding a fixed-point free permutation in $G$ \cite{CW10,Lubiw}
  (this is the case when $k=n$). We show that this problem with
  parameter $k$ is fixed parameter tractable. In the process, we give
  a simple deterministic polynomial-time algorithm for finding a fixed
  point free element in a transitive permutation group, answering an
  open question of Cameron \cite{Cam4,CW10}.

\item Next we consider the problem of computing a base for a
  permutation group $G=\angle{S}\le S_n$. A \emph{base} for $G$ is a
  subset $B\subseteq [n]$ such that the subgroup of $G$ that fixes $B$
  pointwise is trivial. This problem is known to be $\NP$-complete
  \cite{Bla}. We show that it is fixed parameter tractable for the
  case of cyclic permutation groups and for permutation groups of
  constant orbit size. For more general classes of permutation groups
  we do not know whether the problem is in FPT or is W[1]-hard.
\end{itemize}

\end{abstract}

\section{Introduction}\label{sec1}

Let $S_n$ denote the group of all permutations on a set of size $n$.
The group $S_n$ is also called the symmetric group of degree $n$. We
refer to a subgroup $G$ of $S_n$, denoted by $G\le S_n$, as a
permutation group (of degree $n$). Let $S\subseteq S_n$ be a subset of
permutations.  The permutation group \emph{generated} by $S$, denoted
by $\angle{S}$, is the smallest subgroup of $S_n$ containing $S$. A
subset $S\subseteq G$ of a permutation group $G$ is a \emph{generating
  set} for $G$ if $G=\angle{S}$. It is easy to see that every finite
group $G$ has a generating set of size $\log_2 |G|$.

Let $G =\angle{S}\le S_n$ be a subgroup of the symmetric group $S_n$,
where $G$ is given as input by a generating set $S$ of
permutations. There are many algorithmic problems on permutation
groups that are given as input by their generating sets (e.g. see
\cite{Sims,FHL80,Luk93,Ser03}). Some of them have efficient
algorithms, some others are $\NP$-complete, and yet others have a
status similar to Graph Isomorphism: they are neither known to be in
polynomial time and unlikely to be NP-complete (unless the
Polynomial-Time Hierarchy collapses). Efficient permutation group
algorithms have played an important role in the design of algorithms
for the Graph Isomorphism problem \cite{Bab79,BKL83}. In fact the
algorithm with the best running time bound for general Graph
Isomorphism is group-theoretic.

We recall some definitions and notions from permutation group theory.
Let $\pi\in S_n$ be a permutation. A \emph{fixed point} of $\pi$ is a
point $i\in [n]$ such that $\pi(i)=i$ and $\pi$ is \emph{fixed point
  free} if $\pi(i)\ne i$ for all $i\in [n]$.

Let $G\le S_n$ and $\Delta\subseteq [n]$ be a subset of the domain.
The \emph{pointwise stabilizer subgroup} of $G$, denoted $G_\Delta$,
is $\{g\in G\mid g(i)=i\textrm{ for all }i\in\Delta\}$.  

A subset $B\subseteq [n]$ is called a \emph{base} for $G$ if the
pointwise stabilizer subgroup $G_B$ is trivial. Thus, if $B$ is a base
for $G$ then each element of $G$ is uniquely determined by its action
on $B$. The problem of computing a base of minimum cardinality is
known to be computationally very useful. Important algorithmic
problems on permutation groups, like membership testing, have nearly
linear time algorithms in the case of small-base groups (e.g. see
\cite{Ser03}). We will discuss the parameterized complexity of the
minimum base problem in Section~\ref{sec3}.

An excellent modern reference on permutation groups is Cameron's book
\cite{Cam1}. Algorithmic permutation group problems are very well
treated in \cite{Luk93,Ser03}. Basic definitions and results on
parameterized complexity can be found in Downey and Fellows' classic
text on the subject \cite{DFbook}. Another, more recent, reference is
\cite{FGbook}.

\section{Fixed point free elements}\label{sec2}

The starting point is the Orbit-Counting lemma. Our discussion will
follow Cameron's book \cite{Cam1}. For each permutation $g\in S_n$ let
$\fix(g)$ denote the number of points fixed by $g$. More precisely,
\[
\fix(g) = |\{i\in [n]\mid g(i)= i\}|.
\]

A permutation group $G\le S_n$ induces, by its action an equivalence
relation on the domain $[n]$: $i$ and $j$ are in the same equivalence
class if $g(i)=j$ for some $g\in G$. Each equivalence class is an
\emph{orbit} of $G$. $G$ is said to be \emph{transitive} if there is
exactly one $G$-orbit. Let $\orb(G)$ denote the number of $G$-orbits
in the domain $[n]$.  We recall the statement.

\begin{lemma}[Orbit Counting Lemma]{\rm\cite{Cam2}}
Let $G\le S_n$ be a permutation group. Then
\begin{eqnarray}\label{orbcount}
\orb(G) = {\frac{1}{|G|}}\sum_{g\in G}\fix(g).
\end{eqnarray}          
I.e.\ the number of $G$ orbits is the average number of
fixed points over all elements of $G$.
\end{lemma}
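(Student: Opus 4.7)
The plan is to prove the identity by a standard double-counting argument applied to the set
\[
F = \{(g,i) \in G \times [n] \mid g(i) = i\}.
\]
Counting $F$ by first fixing $g$ and then summing over the points it fixes gives $|F| = \sum_{g\in G}\fix(g)$, the right-hand side of \eqref{orbcount} (up to the factor $1/|G|$). Counting $F$ by first fixing $i\in [n]$ and then counting the group elements fixing $i$ gives $|F| = \sum_{i\in [n]}|G_i|$, where $G_i = \{g\in G\mid g(i)=i\}$ is the pointwise stabilizer of the singleton $\{i\}$. So it suffices to show $\sum_{i\in [n]}|G_i| = |G|\cdot \orb(G)$.

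For this I would invoke the Orbit-Stabilizer theorem: for each $i\in[n]$, the orbit of $i$ under $G$ is in bijection with the left cosets of $G_i$ in $G$, hence $|G_i| = |G|/|\orb_G(i)|$, where $\orb_G(i)$ denotes the orbit containing $i$. Substituting,
\[
\sum_{i\in [n]}|G_i| \;=\; |G|\sum_{i\in [n]}\frac{1}{|\orb_G(i)|}.
\]

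Next I would partition $[n]$ into $G$-orbits $O_1,\ldots,O_r$ where $r=\orb(G)$. Since all points in a common orbit $O_j$ have $|\orb_G(i)|=|O_j|$, we get
\[
\sum_{i\in O_j}\frac{1}{|\orb_G(i)|} = \frac{|O_j|}{|O_j|} = 1,
\]
and summing over the $r$ orbits yields $\sum_{i\in [n]}1/|\orb_G(i)| = r = \orb(G)$. Combining gives $\sum_{g\in G}\fix(g) = |F| = |G|\cdot \orb(G)$, and dividing by $|G|$ yields \eqref{orbcount}.

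The argument has no real obstacle: the only nontrivial ingredient is the Orbit-Stabilizer theorem, which is a standard prerequisite. The key conceptual step is to realize that both coordinates of the incidence set $F$ can be summed in either order, converting a statement about fixed points of group elements into a statement about stabilizers of domain elements, where orbit structure becomes directly accessible.
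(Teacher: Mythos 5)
Your proof is correct and is essentially the same as the paper's: both count the incidence set $\{(g,i)\mid g(i)=i\}$ (the paper phrases it as counting the $1$'s of a $|G|\times n$ matrix by rows and by columns) and then apply the Orbit--Stabilizer theorem to convert $\sum_i|G_i|$ into $|G|\cdot\orb(G)$. Your write-up just spells out the final summation over orbits more explicitly.
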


\begin{proof}
It is useful to recall a proof sketch. Define a $|G|\times n$ matrix
with rows indexed by elements of $G$ and columns by points in $[n]$.
The $(g,i)^{th}$ entry is defined to be $1$ if $g(i)=i$ and $0$
otherwise. Clearly, the $g^{th}$ row has $\fix(g)$ many $1$'s in it.
Let $G_i$ denote the subgroup of $G$ that fixes $i$. The $i^{th}$
column clearly has $|G_i|$ many $1$'s. Counting the number of $1$'s
in the rows and columns and equating them, keeping in mind that
$|G|/|G_i|$ is the size of the orbit containing $i$ yields the
lemma.
\end{proof}

We now recall a theorem of Jordan on permutation groups
\cite{Jo}. See \cite{Serre,Cam4} for very interesting accounts of
it. A permutation group $G\le S_n$ is \emph{transitive} if it has
exactly one orbit.

\begin{theorem}[Jordan's theorem]
If $G\le S_n$ is transitive then $G$ has a fixed point free element.
\end{theorem}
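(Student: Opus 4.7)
The plan is to derive Jordan's theorem as a direct consequence of the Orbit-Counting Lemma recalled just above, via an averaging argument that separates out the contribution of the identity permutation.

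First I would invoke transitivity: by assumption $G$ has exactly one orbit, so $\orb(G) = 1$. Plugging this into the Orbit-Counting Lemma (equation \eqref{orbcount}) yields the identity
\[
\sum_{g \in G} \fix(g) \;=\; |G|.
\]
Next I would isolate the identity permutation $e \in G$, which contributes $\fix(e) = n$ to the left hand side. Subtracting this term gives
\[
\sum_{g \in G \setminus \{e\}} \fix(g) \;=\; |G| - n.
\]

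Now I would argue by a simple pigeonhole/averaging step. Assuming $n \ge 2$ (the theorem is vacuous for $n=1$, where the only transitive subgroup is trivial and has no non-identity elements at all — more generally one needs $G$ to be nontrivial, which transitivity on $n \ge 2$ points guarantees), the right hand side is at most $|G| - 2$, which is strictly less than $|G| - 1$, the number of non-identity elements of $G$. Therefore the average value of $\fix(g)$ over $g \in G \setminus \{e\}$ is strictly less than $1$, and since $\fix$ is nonnegative integer valued, at least one non-identity element $g \in G$ must satisfy $\fix(g) = 0$. Such a $g$ is by definition fixed point free, proving the theorem.

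There is no real obstacle here beyond noticing that the identity's contribution of $n$ points already saturates the orbit count of $1$ (times $|G|$), leaving too little total fixedness to spread across the remaining $|G|-1$ elements. The whole argument is a one-line consequence of the Orbit-Counting Lemma, and its interest lies not in the proof's difficulty but in the striking non-constructive conclusion — which is precisely what motivates the algorithmic question, addressed later in the paper, of actually \emph{finding} such a fixed point free element in polynomial time.
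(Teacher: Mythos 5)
Your proof is correct and follows essentially the same route as the paper: apply the Orbit-Counting Lemma with $\orb(G)=1$, note that the identity already contributes $n$ fixed points, and conclude by averaging that some element must have $\fix(g)=0$. Your version is just slightly more explicit about the $n\ge 2$ caveat and the arithmetic of the averaging step.
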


It follows directly from the Orbit counting lemma. Notice that the
left side of Equation~\ref{orbcount} equals $1$. The right side of the
equation is the average over all $\fix(g)$. Now, the identity element
$1$ fixes all $n$ elements. Thus there is at least one element $g\in
G$ such that $\fix(g)=0$. Cameron and Cohen \cite{Cam2} do a more
careful counting and show the following strengthening.

\begin{theorem}{\rm\cite{Cam2}}
If $G\le S_n$ is transitive then there are at least $|G|/n$
elements of $G$ that are fixed point free.
\end{theorem}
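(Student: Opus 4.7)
The Orbit-Counting Lemma gives $\sum_{g \in G} \fix(g) = |G|$ for transitive $G$, and the orbit-stabilizer relation gives $|G_i| = |G|/n$ for each $i \in [n]$. My goal is to strengthen Jordan's existence result to the quantitative bound $|D| \ge |G|/n$, where $D$ denotes the set of derangements.

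The first approach I would try is to construct an explicit injection $\phi \from G_1 \to D$; since $|G_1| = |G|/n$ by orbit-stabilizer, such an injection would immediately yield the claim. The most natural candidate is left multiplication by a fixed reference derangement $g_0 \in D$ (known to exist by Jordan's theorem), namely $\phi(h) = g_0 h$. This map is certainly injective, but unfortunately $g_0 h$ need not be a derangement in general: if $h \in G_1$ and $j \neq 1$ is some point with $g_0(h(j)) = j$, then $\phi(h)$ fixes $j$. Small counterexamples already appear in $S_3$. One could try to rescue this by letting the reference derangement depend on the coset: decompose $G$ into the $n$ left cosets $G_1, g_2 G_1, \ldots, g_n G_1$ (with $g_i(1) = i$), observe that $G_1$ itself contains no derangement, and attempt to guarantee, across the remaining $n-1$ cosets, a total of at least $|G|/n$ derangements.

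An alternative approach is a more refined double counting. Alongside the first-moment identity $\sum_g \fix(g) = |G|$, applying the Orbit-Counting Lemma to the $G$-action on $[n] \times [n]$ yields $\sum_g \fix(g)^2 = r |G|$, where $r$ is the number of orbits of $G$ on ordered pairs. Combining these with inclusion-exclusion on $\bigcup_i G_i$ in principle determines $|D|$ in terms of the higher pointwise stabilizers $|G_S|$ for $|S| \ge 2$. The main obstacle I foresee is that the naive applications of Cauchy--Schwarz or Markov-type inequalities on the distribution of $\fix(g)$ produce \emph{upper} bounds on $|D|$ rather than lower bounds; a successful lower-bound argument must exploit both the integrality of $\fix(g)$ and the outsized contribution of the identity, which by itself accounts for $n$ out of the total first-moment budget $|G|$. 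I expect the crux of the proof to be finding a weighted counting that separates out this identity contribution cleanly, forcing the remaining fixed-point mass to be concentrated on at most $|G|(n-1)/n$ elements and hence leaving at least $|G|/n$ elements with no fixed points at all.
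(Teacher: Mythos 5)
Your proposal does not yet contain a proof: both routes you sketch stop at the obstacles you yourself identify, and the heuristic you offer as the likely crux --- isolating the contribution of the identity to the first-moment budget $\sum_{g\in G}\fix(g)=|G|$ --- is not strong enough. Removing the identity leaves $\sum_{g\ne 1}\fix(g)=|G|-n$, and since every non-derangement fixes at least one point this only yields $|A|\ge n-1$ (writing $A$ for the set of derangements), which is weaker than $|A|\ge |G|/n$ once $|G|>n(n-1)$. No second moment, inclusion-exclusion over higher stabilizers, or Cauchy--Schwarz input is needed.

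The idea you are missing is to split off not the identity but an entire point stabilizer, and to apply the Orbit-Counting Lemma a \emph{second} time, to that stabilizer. Fix $\alpha\in[n]$ and write
\[
|G|=\sum_{g\in G_\alpha}\fix(g)+\sum_{g\in G\setminus G_\alpha}\fix(g).
\]
By the Orbit-Counting Lemma applied to $G_\alpha$, the first sum equals $\orb(G_\alpha)\cdot|G_\alpha|\ge 2\,|G_\alpha|=2|G|/n$, since $G_\alpha$ has at least the two orbits $\{\alpha\}$ and a part of $[n]\setminus\{\alpha\}$, and by transitivity and orbit-stabilizer $|G_\alpha|=|G|/n$. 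Every derangement lies in $G\setminus G_\alpha$, and every element of $G\setminus G_\alpha$ that is not a derangement fixes at least one point, so the second sum is at least $|G|-|G_\alpha|-|A|=|G|-|G|/n-|A|$. Combining, $|G|\ge 2|G|/n+|G|-|G|/n-|A|$, i.e.\ $|A|\ge |G|/n$. This is exactly the Cameron--Cohen argument reproduced in the paper; your instinct that one must ``separate out'' a distinguished chunk of the fixed-point mass was correct, but the distinguished chunk is $G_\alpha$, whose internal fixed-point count is again controlled by orbit counting, rather than the single identity element.
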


We discuss their proof, because we will build on it to obtain our
results. If $G$ is transitive, the orbit counting lemma implies
\[
|G| = \sum_{g\in G}\fix(g).
\]
Take any point $\alpha\in [n]$. We can write the above equation as
\[
|G|=\sum_{g\in G_\alpha}\fix(g) + \sum_{g\in G\setminus G_\alpha}\fix(g).
\]
By the orbit counting lemma applied to the group $G_\alpha$ we have
\[
\sum_{g\in G_\alpha}\fix(g) = \orb(G_\alpha)\cdot |G_\alpha|.
\]

Let $F\subset G$ be the set of all fixed point free elements of
$G$. Clearly, $\sum_{g\in G\setminus G_\alpha}\fix(g)\ge |G\setminus
A|$ as $A\subseteq G\setminus G_\alpha$ and each element of
$G\setminus A$ fixes at least one element. Combining with 
the previous equation we get
\[
|A| \ge \orb(G_\alpha)\cdot |G_\alpha|=\orb(G_\alpha)\cdot \frac{|G|}{n}
\ge \frac{|G|}{n}.
\]

\subsection{The Algorithmic Problem}

We now turn to the problem of computing a fixed point free element in
a permutation group $G\le S_n$ and a natural parameterized version.

As observed by Cameron and Wu in \cite{CW10}, the result of
\cite{Cam2} gives a simple randomized algorithm to find a fixed point
free element in a transitive permutation group $G\le S_n$, where $G$
is given by a generating set $S$: Using Schreier-Sims polynomial-time
algorithm \cite{Sims} we can compute a \emph{strong generating set}
$S'$ for $G$ in polynomial time. And using $S'$ we can sample
uniformly at random from $G$. Clearly, in $O(n)$ sampling trials we
will succeed in finding a fixed point free element with constant
probability. We will show in the next section that this algorithm can
be \emph{derandomized} to obtain a deterministic polynomial time
algorithm (without using CFSG). This answers an open problem of
Cameron discussed in \cite{CW10,Cam4}.

This result is to be contrasted with the fact that computing fixed
point free elements in nontransitive groups $G\le S_n$ is $\NP$-hard.
The decision problem is shown $\NP$-complete in \cite{CW10}. This is
quite similar to Lubiw's result \cite{Lubiw} that checking if a graph
$X$ has a fixed point free automorphism is $\NP$-complete.

We will now introduce the parameterized version of the problem of
computing fixed point free elements in permutation groups. First we
introduce some terminology. We say that a permutation $\pi$
\emph{moves} a point $i\in [n]$ if $\pi(i)\ne i$.\\

\noindent\textbf{$k$-MOVE Problem}\\

\noindent{INPUT}: A permutation group $G=\angle{S}\le S_n$ given
by generators and a number $k$.\\

\noindent{PROBLEM}: Is there an element $g\in G$ that moves at least
$k$ points. \\

For $k=n$ notice that k-MOVE is precisely the problem of checking if
there is a fixed point free element in $G$. The parameterized version
of the problem is to treat $k$ as parameter. We will show that this
problem is fixed parameter tractable.

Let $\mov(g)$ denote the number of points moved by $g$. We define
two numbers $\fix(G)$ and $\mov(G)$:

\begin{eqnarray*}
\fix(G) & = & |\{i\in[n]\mid g(i)=i\textrm{ for all }g\in G\}|\\
\mov(G) & = & |\{i\in[n]\mid g(i)\ne i\textrm{ for some }g\in G\}|\\
\end{eqnarray*}

I.e.\ $\fix(G)$ is the number of points fixed by all of $G$ and
$\mov(G)$ is the number of points moved by some element of
$G$. Clearly, for all $g\in G$, $\mov(g) = n-\fix(g)$ and $\mov(G) =
n-\fix(G)$. Furthermore, notice that $\orb(G)\le \fix(G)+\mov(G)/2$,
and we have $n-\orb(G)\ge \mov(G)/2$. Let $G=\angle{S}\le S_n$ be an
input instance for the $k$-MOVE problem. Substituting $n-\mov(g)$ for
$\fix(g)$ in Equation~\ref{orbcount} and rearranging terms we obtain
\begin{eqnarray}\label{moveqn}
\mov(G)/2\le n-\orb(G)=\frac{1}{|G|}\sum_{g\in G}\mov(g)=\E_{g\in G}[\mov(g)],
\end{eqnarray}
where the expectation is computed for $g$ picked uniformly at random
from $G$. 

We will show there is a deterministic polynomial time algorithm that
on input $G=\angle{S}\le S_n$ outputs a permutation $g\in G$ such that
$\mov(g)\ge n-\orb(G)\ge \mov(G)/2$. Using this algorithm we will
obtain an FPT algorithm for the $k$-MOVE problem. We require the
following useful lemma about computing the average number of points
moved by uniformly distributed elements from a \emph{coset} contained
in $S_n$.

\begin{lemma}\label{orb-gen}
Let $G\pi \subseteq S_n$ be a coset of a permutation group
$G=\angle{S}\le S_n$, where $\pi\in S_n$. There is a deterministic
algorithm that computes $\E_{g\in G}[\mov(g\pi)]$ in time polynomial
in $|S|$ and $n$.
\end{lemma}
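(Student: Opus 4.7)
The plan is to expand $\mov(g\pi)$ as a sum of indicator functions and use linearity of expectation. Writing $(g\pi)(i)=g(\pi(i))$, we have
\[
\mov(g\pi) \;=\; \sum_{i=1}^n \one[g(\pi(i))\ne i],
\]
so by linearity,
\[
\E_{g\in G}[\mov(g\pi)] \;=\; \sum_{i=1}^n \Pr_{g\in G}[\,g(\pi(i))\ne i\,] \;=\; n - \sum_{i=1}^n \Pr_{g\in G}[\,g(\pi(i))= i\,].
\]
So it suffices to compute, for each $i\in[n]$, the quantity $p_i := \Pr_{g\in G}[g(\pi(i))=i]$.

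For a fixed $i$, consider the point $\alpha=\pi(i)\in[n]$. If $i$ is not in the $G$-orbit of $\alpha$, then no $g\in G$ satisfies $g(\alpha)=i$, so $p_i=0$. Otherwise, the set $\{g\in G\mid g(\alpha)=i\}$ is a (nonempty) right coset of the point stabilizer $G_\alpha$, and hence has cardinality $|G_\alpha|$. By orbit-stabilizer, $|G|/|G_\alpha|$ equals the size of the $G$-orbit of $\alpha$, call it $O_\alpha$. Thus in this case $p_i = 1/|O_\alpha|$.

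What we therefore need from $G=\angle{S}$ is just the orbit partition of $[n]$, together with the size of each orbit. This is the classical orbit computation: starting from each point, apply generators in $S$ and perform a BFS in the orbit graph, which takes time polynomial in $|S|$ and $n$ (see e.g.\ \cite{Luk93,Ser03}). Once the orbits are in hand, one pass through $[n]$ computes each $p_i$ (check whether $i$ and $\pi(i)$ lie in the same orbit, and if so divide by its size) and sums up $\sum_i (1-p_i)$. There is no real obstacle here; the entire computation is straightforward bookkeeping on top of the standard orbit algorithm, and the whole argument never requires sampling or any structure finer than the $G$-orbits.
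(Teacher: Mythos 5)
Your proof is correct and is essentially the paper's argument in different clothing: the paper's double-counting of a $\{0,1\}$ matrix indexed by $G\pi\times[n]$ is exactly your linearity-of-expectation computation, and both reduce to the observation that $|\{g\in G\mid g(\pi(i))=i\}|$ is either $0$ or the order of a point stabilizer, computable from the orbit partition. (The paper writes this count as $|G_i|$ and you as $|G_{\pi(i)}|$; these agree since the two points lie in the same orbit.)
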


\begin{proof}
We again use a double counting argument. Define a $0$-$1$ matrix with
rows indexed by $g\pi, g\in G$ and columns by $i\in [n]$, whose
$(g\pi,i)^{th}$ entry is $1$ if and only if $g(\pi(i))\ne i$. Thus,
the number of $1$'s in the $i^{th}$ column of the matrix is $|G| -
|\{g\in G\mid g(\pi(i))=i\}|$. Now, $|\{g\in G\mid g(\pi(i))=i\}|$ is
zero if $\pi(i)$ and $i$ are in different $G$-orbits and is $|G_i|$ if
they are in the same orbit. In polynomial time we can compute the
orbits of $G$ and check this condition. Also, the number $|G| -
|\{g\in G\mid g(\pi(i))=i\}| = |G|-|G_i|$ is computable in polynomial
time. Call this number $N_i$. It follows that the total number of
$1$'s in the matrix is $\sum_{i=1}^n N_i$, which is computable in
polynomial time. Since $\sum_{i=1}^n N_i=\sum_{g\in G}\mov(g\pi)$, it
follows that $\frac{1}{|G|}\sum_{g\in G}\mov(g\pi)=\E_{g\in
  G}[\mov(g\pi)]$ can be computed exactly in polynomial time.
\end{proof}

\begin{theorem}\label{derand}
There is a deterministic polynomial-time algorithm that takes as input
a permutation group $G=\angle{S}\le S_n$ given by generators and a
permutation $\pi\in S_n$ and computes an element $g\in G$ such that
$\mov(g\pi)\ge \E_{g\in G}[\mov(g\pi)]$.
\end{theorem}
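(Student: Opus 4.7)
The plan is to apply the classical method of conditional expectations along a stabilizer chain of $G$. First, using the Schreier--Sims algorithm, I would compute the pointwise stabilizer chain
\[
G = G^{(0)} \ge G^{(1)} \ge \cdots \ge G^{(n)} = \{1\},
\]
where $G^{(i)}$ is the stabilizer of $\{1,2,\ldots,i\}$ in $G$, together with a complete set $U_i$ of right coset representatives for $G^{(i)}$ in $G^{(i-1)}$. Both the chain and the coset representatives are computable in polynomial time, and every element of $G$ has a unique factorization $g = u_n u_{n-1} \cdots u_1$ with $u_i \in U_i$.

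Using this data structure, I would greedily select $u_1, u_2, \ldots, u_n$ one at a time so as to never decrease the conditional expectation. Suppose $u_1,\ldots,u_j$ have already been fixed, and set $\sigma_j = u_j \cdots u_1$. The set of elements of $G$ consistent with these choices is precisely the coset $G^{(j)}\sigma_j$, so the conditional expectation to control is
\[
\E_{h \in G^{(j)}}\bigl[\mov(h \sigma_j \pi)\bigr].
\]
For each candidate $u \in U_{j+1}$, after tentatively choosing $u_{j+1}=u$ the remaining elements form the coset $G^{(j+1)} u \sigma_j$, and the corresponding conditional expectation is
\[
\E_{h \in G^{(j+1)}}\bigl[\mov(h \cdot u\sigma_j \pi)\bigr],
\]
which is precisely the quantity computed by Lemma~\ref{orb-gen} applied to the subgroup $G^{(j+1)}$ (given by its Schreier--Sims generators) and the permutation $u\sigma_j \pi \in S_n$. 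I would pick the $u \in U_{j+1}$ that maximizes this value; since the values over $u \in U_{j+1}$ average to the previous conditional expectation, the chosen value is at least that expectation, and by induction the final element $g = u_n \cdots u_1$ satisfies $\mov(g\pi) \ge \E_{g \in G}[\mov(g\pi)]$.

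The running time is polynomial because $|U_j| \le n$ for each $j$, there are at most $n$ levels, and each conditional expectation is a single invocation of Lemma~\ref{orb-gen}. The main thing to verify carefully is that the greedy step really reduces to Lemma~\ref{orb-gen}: the lemma is stated for a coset $G\pi$ of a permutation group, and the observation that is needed is that once some initial coset representatives are committed to, the set of extensions is itself a coset of a subgroup of $G$ (namely $G^{(j+1)}$), which is readily available from the Schreier--Sims output. Beyond this bookkeeping, the argument is the standard derandomization-by-conditional-expectations template, and no further combinatorial ingredient is required.
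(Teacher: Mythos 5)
Your proposal is correct and follows essentially the same route as the paper: the paper also descends the pointwise stabilizer chain $G \ge G_1 \ge \cdots \ge G_{n-1}$, uses Lemma~\ref{orb-gen} to compute the conditional expectation $\mu_i$ on each coset $G_1 g_i \pi$, and keeps the coset maximizing it, which cannot decrease the average. Your write-up is if anything a bit more explicit about the bookkeeping (the factorization $g = u_n\cdots u_1$ and why each partial commitment yields a coset of a stabilizer subgroup to which the lemma applies), but the argument is the same method of conditional expectations.
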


\begin{proof}
We have
\[
\frac{1}{|G|}\sum_{g\in G}\mov(g\pi)=\E_{g\in G}[\mov(g\pi)]=\mu,
\]
and by Lemma\ref{orb-gen} we can compute $\mu$ in polynomial time. We
can write $G$ as a disjoint union of cosets $G=\bigcup_{i=1}^rG_1g_i$,
where $G_1$ is the subgroup of $G$ that fixes $1$ and $g_i$ are the
coset representatives, where the number of cosets $r\le n$. Using
Schreier-Sims algorithm \cite{Sims} we can compute all coset representatives
$g_i$ and a generating set for $G_1$ from the input in polynomial
time.

Now, we can write the summation $\frac{1}{|G|}\sum_{g\in G}\mov(g\pi)$ as
a sum over the cosets $G_1g_i\pi$ of $G_1$: 
\[
\frac{1}{|G|}\sum_{g\in G}\mov(g\pi)= 
\frac{1}{|G|}\sum_{i=1}^r\sum_{g\in G_1}\mov(gg_i\pi). 
\]
For $1\le i\le r$ let 
\[
\mu_i=\frac{1}{|G_1|}\sum_{g\in G_1}\mov(gg_i\pi).
\]
Since $|G|/|G_1|=r$, it follows that
$\mu=\frac{1}{r}\sum_{i=1}^r\mu_i$ is an average of the $mu_i$. Let
$\mu_t$ denote $\max_{1\le i\le r} \mu_i$. Clearly, $\mu\le mu_t$ and
therefore there is some $g\in G_1g_t\pi$ such that $\mov(g)\ge
\mu_t\ge \mu$ and we can continue the search in the coset $G_1g_t$
since we can compute all the $\mu_i$ in polynomial time by
Lemma~\ref{orb-gen}. Continuing thus for $n-1$ steps, in polynomial
time we will obtain a coset $G_{n-1}\tau$ containing the unique
element $\tau$ such that $\mov(\tau)\ge \mu$. This completes
the proof.
\end{proof}

Cameron, in \cite{CW10} and in the lecture notes \cite{Cam4}, raises
the question whether the randomized algorithm, based on uniform
sampling, for finding a fixed point free element in a transitive
permutation group (given by generators) can be derandomized. In
\cite{CW10} a deterministic algorithm (based on the classification of
finite simple groups) is outlined. The algorithm does a detailed case
analysis based on the CFSG and is not easy to verify. Here we show
that the randomized algorithm can be easily derandomized yielding a
simple polynomial-time algorithm. The derandomization is essentially a
simple application of the ``method of conditional probabilities''
\cite{ES,Rag}.

\begin{corollary}
Given a transitive permutation group $G=\angle{S}\le S_n$ by a
generating set $S$, we can compute a fixed point free element
of $G$ in deterministic polynomial time.
\end{corollary}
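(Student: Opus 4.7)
The plan is to apply Theorem~\ref{derand} not to $G$ directly, but to the point stabilizer $G_1$ shifted by a carefully chosen coset representative. A naive direct application to $G$ with $\pi = e$ only yields an element $g \in G$ with $\mov(g) \ge \E_{g \in G}[\mov(g)] = n - \orb(G) = n - 1$ (using transitivity), which is not enough: such a $g$ could still fix exactly one point. The trick is to boost the conditional expectation \emph{strictly} above $n - 1$ so that integrality of $\mov$ forces the output to be fixed point free.

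Concretely, using Schreier-Sims compute $G_1$ and right coset representatives $\sigma_1 = e, \sigma_2, \ldots, \sigma_n \in G$ with $G = \bigsqcup_{i=1}^n G_1 \sigma_i$. For each $i$ use Lemma~\ref{orb-gen} (applied to the group $G_1$ with shift $\sigma_i$) to compute
\[
\mu_i \;:=\; \E_{h \in G_1}[\mov(h\sigma_i)]
\]
in polynomial time. Orbit-counting on $G$ gives $\sum_{i=1}^n \mu_i = n(n-1)$, while orbit-counting on $G_1$ gives $\mu_1 = n - \orb(G_1)$. Hence the largest $\mu_{i^*}$ over $i \ne 1$ satisfies
\[
\mu_{i^*} \;\ge\; \frac{n(n-1) - (n - \orb(G_1))}{n-1} \;=\; n - 1 + \frac{\orb(G_1) - 1}{n-1}.
\]

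The key observation is that transitivity of $G$ on $[n]$ with $n \ge 2$ forces $\orb(G_1) \ge 2$: the singleton $\{1\}$ is already one $G_1$-orbit, and $[n] \setminus \{1\}$ is nonempty and partitioned into at least one further $G_1$-orbit. Thus $\mu_{i^*} \ge n - 1 + \tfrac{1}{n-1} > n - 1$. Now apply Theorem~\ref{derand} to the group $G_1$ (with a generating set obtained from Schreier-Sims) with permutation $\pi = \sigma_{i^*}$; this produces $h \in G_1$ with $\mov(h \sigma_{i^*}) \ge \mu_{i^*} > n - 1$. Since $\mov$ is integer-valued and bounded by $n$, this strict inequality upgrades to $\mov(h\sigma_{i^*}) = n$, so $g := h \sigma_{i^*} \in G$ is fixed point free.

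The only subtlety, and the point where the argument could be mis-applied, is the strictness of the inequality: applying Theorem~\ref{derand} to all of $G$ falls short by exactly one point. Peeling off the ``bad'' coset $G_1$, whose elements all fix $1$ and hence contribute below-average $\mov$, creates exactly the slack $\tfrac{1}{n-1} > 0$ needed to push the conditional expectation strictly above the integer threshold $n - 1$, after which integrality of $\mov$ does the rest. This is the essence of the method of conditional expectations in this setting.
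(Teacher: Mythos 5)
Your proposal is correct and follows essentially the same route as the paper: both exploit that the coset $G_1$ (whose elements all fix $1$) has average $\mov$ at most $n-\orb(G_1)\le n-2$, so some other coset of $G_1$ has conditional expectation strictly above $n-1$, and then the conditional-expectation algorithm of Theorem~\ref{derand} plus integrality of $\mov$ yields a fixed point free element. Your version merely makes the first branching step and the quantitative slack $\tfrac{\orb(G_1)-1}{n-1}$ explicit, whereas the paper runs Theorem~\ref{derand} on $G$ directly and observes that its first step already lands in such a coset.
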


\begin{proof}
Notice that $\E_{g\in G}[\mov(g)]=n-1$ to begin with. However, since
$G_1$ has at least two orbits, we have by orbit counting lemma that
$\E_{g\in G_1}[\mov(g)]\le n-2$. Hence, for some coset $G_1g_i$ of
$G_1$ in $G$ we must have $\E_{g\in G_1g_i}[\mov(gg_i)] > n-1$.  The
polynomial-time algorithm of Theorem~\ref{derand} applied to $G$ will
therefore continue the search in cosets where the expected value is
strictly more than $n-1$ which means that it will finally compute a
fixed point free element of $G$.
\end{proof}

Given $G=\angle{S}\le S_n$ there is a trivial exponential time
algorithm for finding a fixed point free element in $G$: compute a
strong generating set for $G$ in polynomial time \cite{Sims}. Then
enumerate $G$ in time $|G|.n^{O(1)}$ using the strong generating set,
checking for a fixed point free element. This algorithm could have
running time $n!$ for large $G$.  We next describe a $2^nn^{O(1)}$
time algorithm for finding a fixed point free element based on
inclusion-exclusion and coset intersection.

\begin{theorem}\label{fpfsearch}
Given a permutation group $G=\angle{S}\le S_n$ and $\pi\in S_n$ there
is a $2^{n+O(\sqrt{n}\lg n)}n^{O(1)}$ time algorithm to test if the
coset $G\pi$ has a fixed point free element and if so compute it.
\end{theorem}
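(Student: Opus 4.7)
The plan is to count the fixed-point-free elements in $C:=G\pi$ by inclusion-exclusion over candidate fixed-point sets, using a subexponential coset intersection subroutine to handle each term, and then to locate an actual element by standard search-to-decision branching.

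For each $A\subseteq[n]$, let $C_A=C\cap(S_n)_A$ denote the elements of the coset that fix $A$ pointwise. Inclusion-exclusion on the set of fixed points yields that the number of fixed point free elements of $C$ equals $\sum_{A\subseteq[n]}(-1)^{|A|}|C_A|$. If $C_A\ne\emptyset$, then fixing any $h_0\in C_A$ and writing $h=gg_0^{-1}h_0$ shows that $C_A=G_Ah_0$ is a right coset of the pointwise stabilizer $G_A$, so $|C_A|\in\{0,|G_A|\}$; the cardinality $|G_A|$ is computable in polynomial time by Schreier-Sims. Deciding whether $C_A$ is empty is exactly the question whether the partial permutation sending $\pi(a)\mapsto a$ for $a\in A$ extends to an element of $G$, which is a coset intersection problem in $S_n$ between $G$ and a right coset of $(S_n)_{\pi(A)}$, and therefore can be decided in $2^{O(\sqrt{n}\lg n)}$ time via the Babai-Kantor-Luks subexponential coset intersection algorithm. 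Summing the per-subset work over all $2^n$ subsets $A$ produces the count of fixed-point-free elements of $C$ within the budget $2^{n+O(\sqrt n\lg n)}n^{O(1)}$.

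To actually produce a fixed point free element when the count is positive, branch greedily on the image of each domain point. Determine $h(1)$ by iterating over each $j\ne 1$, applying the counting procedure to the sub-coset $\{h\in C\mid h(1)=j\}$ (itself a coset of $G_1$ in $S_n$), and fixing any $j$ that yields a positive count; then proceed to $h(2),h(3),\ldots$ in the same fashion. Only $O(n^2)$ counting invocations are needed, so the search phase remains within the same time bound.

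The main obstacle is the coset intersection step: the $2^n$ terms in the inclusion-exclusion would be prohibitive unless each non-emptiness test runs in time $2^{O(\sqrt{n}\lg n)}$, so the proof is essentially a packaging of the BKL coset intersection algorithm with the inclusion-exclusion identity. Everything else---the reduction $|C_A|\in\{0,|G_A|\}$, Schreier-Sims for computing $|G_A|$, and the branching to convert counting into search---is routine, and one simply checks that the polynomial overhead from these pieces is absorbed in the claimed $2^{n+O(\sqrt n\lg n)}n^{O(1)}$ bound.
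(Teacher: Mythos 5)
Your proposal is correct and follows essentially the same route as the paper: inclusion--exclusion over the $2^n$ candidate fixed-point sets, with each term reduced to a coset (intersection) computation handled by Schreier--Sims and the Babai--Kantor--Luks $n^{O(\sqrt{n})}$ subroutine, followed by the standard prefix-image branching to turn the counting procedure into a search. The only cosmetic difference is that you compute the signed sum counting fixed-point-free elements directly, while the paper computes $\lvert\bigcup_i (G\pi)_i\rvert$ and compares it to $\lvert G\rvert$; these are equivalent.
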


\begin{proof}
For each subset $\Delta\subseteq [n]$ we can compute the pointwise
stabilizer subgroup $G_\Delta$. This will take time $2^nn^{O(1)}$
overall. For each $i\in [n]$, let $(G\pi)_i$ denote the subcoset of
$G\pi$ that fixes $i$. Indeed, 
\[
(G\pi)_i = \{g\pi\mid g\in G, g\pi(i)=i\}=G_{\pi(i)}\tau_i\pi,
\] 
if there is a $\tau_i\in G$ such that $\tau_i(\pi(i))=i$ and
$(G\pi)_i=\emptyset$ otherwise. 

Clearly, $G\pi$ has a fixed point free element if and only if the
union $\bigcup_{i=1}^n(G\pi)_i$ is a \emph{proper} subset of $G\pi$.
I.e. we need to check if $|\bigcup_{i=1}^n(G\pi)_i| < |G\pi|=|G|$.
Now, $|\bigcup_{i=1}^n(G\pi)_i|$ can be computed in
$2^{n+O(\sqrt{n}\lg n)}n^{O(1)}$ time using the inclusion exclusion
principle: there are $2^n$ terms in the inclusion-exclusion
formula. Each term is the cardinality of a coset intersection of the
form $\bigcap_{i\in I}(G\pi)_i$, for some subset of indices
$I\subseteq [n]$, which can be computed in time $n^{O(\sqrt{n})}$ time
\cite{BKL83}. Hence, we can decide in $2^{n+O(\sqrt{n}\lg n)}n^{O(1)}$
time whether or not $G\pi$ has a fixed point free element. Notice that
this fixed point free element must be in one of the $n-1$ subcosets of
$G\pi$ that maps $1$ to $j$ for $j\in\{2,3,\ldots,n\}$. The subcoset
of $G\pi$ mapping $1$ to $j$ can be computed in polynomial time
\cite{Sims}. Then we can apply the inclusion exclusion principle to
each of these subcosets, as explained above, to check if it contains a
fixed point free element and continue the search in such a
subcoset. Proceeding thus for $n-1$ steps we will obtain a fixed point
free element in $G\pi$, if it exists, in $2^{n+O(\sqrt{n}\lg
  n)}n^{O(1)}$ time.
\end{proof}

We now prove the main result of this section.

\begin{theorem}\label{fpt1}
There is a deterministic $2^{2k+O(\sqrt{k}\lg k)}k^{O(1)}+n^{O(1)}$
time algorithm for the $k$-\textrm{MOVE} problem and hence the problem
is fixed parameter tractable. Furthermore, if $G=\angle{S}\le S_n$ is
a ``yes'' instance the algorithm computes a $g\in G$ such that
$\mov(g)\ge k$.
\end{theorem}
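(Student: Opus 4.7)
The plan is to split on the size of $\mov(G)$: when it is at least $2k$, Theorem~\ref{derand} already suffices; when it is smaller, an adapted version of the inclusion--exclusion plus coset-intersection procedure from Theorem~\ref{fpfsearch}, applied to the small-degree action of $G$ on its support, does the job. First, in $n^{O(1)}$ preprocessing time, I compute the $G$-orbits and the support $\Omega=\{i\in[n]:g(i)\ne i\text{ for some }g\in G\}$, so that $|\Omega|=\mov(G)$.

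If $\mov(G)\ge 2k$, then by Equation~(\ref{moveqn}) we have $\E_{g\in G}[\mov(g)]\ge \mov(G)/2\ge k$, and Theorem~\ref{derand} applied with $\pi$ equal to the identity returns in polynomial time an element $g\in G$ with $\mov(g)\ge k$, and we are done.

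If $\mov(G)<2k$, set $m:=\mov(G)$. If $m<k$, output ``no''; otherwise we search inside the degree-$m<2k$ permutation group induced by $G$ on $\Omega$. The main obstacle, compared to Theorem~\ref{fpfsearch}, is that we must detect elements \emph{moving at least $k$ points}, not just fixed-point-free elements. I handle this with the polynomial identity
\[
\sum_{g\in C}x^{|\fix(g)\cap\Omega|}
\;=\;\sum_{I\subseteq\Omega}(x-1)^{|I|}\,|C_I|,
\qquad C_I:=\{g\in C:g(i)=i\text{ for all }i\in I\},
\]
which holds for any coset $C\subseteq G$ and follows by expanding $x^{|\fix(g)\cap\Omega|}=\prod_{i\in\Omega}(1+(x-1)\mathbf{1}[g(i)=i])$ and summing over $g\in C$. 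The coefficient of $x^j$ on the left-hand side equals $|\{g\in C:|\fix(g)\cap\Omega|=j\}|$, so $C$ contains a $g$ with $\mov(g)\ge k$ if and only if the sum of the coefficients for $j\le m-k$ is strictly positive. Evaluating the right-hand side requires the $2^m\le 2^{2k}$ coset-intersection cardinalities $|C_I|$, each of which, on degree $m$, is computable in $m^{O(\sqrt{m})}=2^{O(\sqrt{k}\log k)}$ time by the coset-intersection algorithm of~\cite{BKL83} (exactly as in the proof of Theorem~\ref{fpfsearch}). A single decision therefore costs $2^{2k+O(\sqrt{k}\log k)}$ time.

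To produce an actual witness, I drill down through subcosets as in Theorem~\ref{fpfsearch}: at each step pick a new point $p\in\Omega$, branch over the at most $m$ possible images $q$ in the current $C$-orbit of $p$, run the decision procedure on each resulting subcoset $\{g\in C:g(p)=q\}$, and recurse into one that still contains an element moving at least $k$ points. The recursion has depth at most $m\le 2k$ and each level branches at most $m$ ways, giving total time $O(m^2)\cdot 2^{2k+O(\sqrt{k}\log k)}k^{O(1)}=2^{2k+O(\sqrt{k}\log k)}k^{O(1)}$ for the hard case; adding the $n^{O(1)}$ preprocessing yields the claimed bound.
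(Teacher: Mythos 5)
Your proposal is correct, and it follows the paper's skeleton exactly in its first half: compute the support $\Omega$ and $\mov(G)$, and when $\mov(G)\ge 2k$ invoke Equation~(\ref{moveqn}) and Theorem~\ref{derand} with $\pi=1$ to get a witness in polynomial time. Where you diverge is the bounded-support case $m=\mov(G)<2k$. The paper enumerates all candidate fixed-point sets $\Delta\subseteq\Omega$ of size at most $k$, forms each pointwise stabilizer $G_\Delta$, and calls Theorem~\ref{fpfsearch} on each to look for an element that is fixed point free on $\Omega\setminus\Delta$; a witness $g$ with $\mov(g)\ge k$ fixes at most $m-k\le k$ points of $\Omega$, so it survives in some $G_\Delta$. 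You instead fold the two levels of enumeration into one: the generating-function identity $\sum_{g\in C}x^{|\mathrm{Fix}(g)\cap\Omega|}=\sum_{I\subseteq\Omega}(x-1)^{|I|}|C_I|$ (a standard inclusion--exclusion, correctly derived) lets you count, for any subcoset $C$, the elements with at most $m-k$ fixed points directly from the $2^m$ quantities $|C_I|$, and the drill-down through subcosets fixing $g(p)=q$ is the same self-reducibility device used in Theorem~\ref{fpfsearch}. This buys you two things: the argument is self-contained (it generalizes rather than invokes Theorem~\ref{fpfsearch}), and the running time is genuinely $2^{2k+O(\sqrt{k}\lg k)}k^{O(1)}$ in a single pass, whereas the paper's version, taken literally, runs the $2^{m+O(\sqrt{m}\lg m)}$ search once per subset $\Delta$ and so picks up an additional $2^{\Theta(k)}$ factor that its stated bound elides (harmless for FPT, but your accounting is the cleaner one). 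One cosmetic point: the paper defines $\fix(g)$ as a number, so you should introduce notation for the fixed-point set before writing $|\mathrm{Fix}(g)\cap\Omega|$; and it is worth remarking that each $|C_I|$ is a pointwise-stabilizer subcoset, so it is in fact computable in polynomial time by Schreier--Sims without the full coset-intersection machinery, though invoking \cite{BKL83} as you do is also fine and mirrors the paper.
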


\begin{proof}
Let $G=\angle{S}\le S_n$ be an input instance of $k$-MOVE with
parameter $k$. By Equation~\ref{moveqn} we know that $\E_{g\in
  G}[\mov(g)\ge \mov(G)/2$. We first compute $\mov(G)$ in polynomial
  time by computing the orbits of $G$. If $\mov(G)\ge 2k$ then the
  input is a ``yes'' instance to the problem and we can apply
  Theorem~\ref{derand} to compute a $g\in G$ such that $\mov(g)\ge k$
  in polynomial time. Otherwise, $\mov(G)\le 2k$. In that case, the
  group $G$ is effectively a permutation group on a set $\Omega
  \subseteq [n]$ of size at most $2k$. For each subset
  $\Delta\subseteq \Omega$ of size at most $k$, we compute the
  pointwise stabilizer subgroup $G_{\Delta}$ of $G$ in polynomial time
  \cite{Sims}. This will take overall $2^{2k}n^{O(1)}$ time. Now, if
  the input is a ``yes'' instance to $k$-MOVE, some subgroup
  $G_\Delta$ must contain a fixed point free element (i.e.\ fixed
  point free in $\Omega\setminus \Delta$). We can apply the algorithm
  of Theorem~\ref{fpfsearch} to compute this element in time
  $2^{2k+O(\sqrt{k}\lg k)}k^{O(1)}$.
\end{proof}

\begin{remark}
We note from the first few lines in the proof of Theorem~\ref{fpt1}
that the application of Theorem~\ref{derand} is actually a polynomial
time reduction from the given $k$-MOVE instance to an instance for
which $\mov(G)\le 2k$. Given $G=\angle{S}\le S_n$ such that
$\mov(G)\le 2k$, note that $G$ is effectively a subgroup of
$S_{2k}$. We can apply the Schreier-Sims algorithm to compute from $S$
a generating set of size $O(k^2)$ for $G$, therefore yielding a
polynomial time computable, $k^{O(1)}$ size kernel (see \cite{FGbook}
for definition) for the $k$-MOVE problem.
\end{remark}

\section{The parameterized minimum base problem}\label{sec3}

In this section we turn to another basic algorithmic problem on
permutation groups.

\begin{definition}
Let $G\le S_n$ be a permutation group. A subset of points $B\subseteq
[n]$ is called a \emph{base} if the pointwise stabilizer subgroup
$G_B$ of $G$ (subgroup of $G$ that fixes $B$ pointwise) is the
identity.
\end{definition}

Since permutation groups with a small base have fast algorithms for
various problems \cite{Ser03}, computing a minimum cardinality base for
$G$ is very useful. The decision problem is $\NP$-complete. On the
other hand, it has a $\lg\lg n$ factor approximation algorithm
\cite{Bla}.

In this section we study the parameterized version of the problem with
base size as parameter. We are unable to resolve if the general case
is FPT or not, we give FPT algorithms in the case of cyclic
permutation groups and for permutation groups with orbits of size
bounded by a constant.\\

\noindent\textbf{$k$-BASE Problem}\\

\noindent{INPUT}: A permutation group $G=\angle{S}\le S_n$ given
by generators and a number $k$.\\

\noindent{PROBLEM}: Is there a base of size at most $k$ for $G$. The
search version is the find such a base.\\

A trivial $n^{k+O(1)}$ algorithm would cycle through all candidate
subsets $B$ of size at most $k$ checking if $G_B$ is the identity.

\begin{remark}
If the elements of the group $G\le S_n$ are explicitly listed, then
the $k$-BASE problem is essentially a hitting set problem, where the
hitting set $B$ has to intersect, for each $g\in G$, the subset of
points moved by $g$. However, the group structure makes it different
from the general hitting set problem and we do not know how to exploit
it algorithmically in the general case.
\end{remark}

\subsection{Cyclic Permutation Groups}

We give an FPT algorithm for the special case when the input
permutation group $G=\angle{S}$ is cyclic. While this is only a
special case, we note that the minimum base problem is NP-hard even
for cyclic permutation groups \cite[Theorem 3.1]{Bla}.

\begin{theorem}\label{abelfpt}
  The $k{-}\mathrm{BASE}$ problem for cyclic permutation groups is fixed
parameter tractable. 
\end{theorem}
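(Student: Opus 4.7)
The plan is to reduce $k$-BASE for a cyclic group $G = \langle g \rangle$ to a covering problem on the cycle lengths of $g$. I decompose $g$ into its disjoint cycles of lengths $n_1, \ldots, n_r$, so that $N := |G| = \mathrm{lcm}(n_1, \ldots, n_r)$. For any $B \subseteq [n]$, a power $g^m$ fixes a point on cycle $j$ if and only if $n_j \mid m$; hence the pointwise stabilizer is $G_B = \langle g^{L(B)} \rangle$ with $L(B) := \mathrm{lcm}\{n_j : B \text{ meets cycle } j\}$. Thus $G_B = \{1\}$ exactly when $L(B) = N$, and the minimum base size equals the minimum cardinality of a subset $S$ of cycle indices with $\mathrm{lcm}_{j \in S} n_j = N$; any such $S$ yields a base of the same size by taking one point per chosen cycle.

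Factoring $N = p_1^{e_1}\cdots p_s^{e_s}$ and setting $C_i = \{j : p_i^{e_i} \mid n_j\}$, the condition $\mathrm{lcm}_{j \in S} n_j = N$ becomes $S \cap C_i \neq \emptyset$ for every $i \in [s]$, so the problem is now a structured hitting-set instance. Two arithmetic bounds will shrink it. First, each cycle length $n_j \le n$ can ``fully cover'' only a small set of primes: writing $\pi(n_j) := \{i : p_i^{e_i} \mid n_j\}$, one has $\prod_{i \in \pi(n_j)} p_i^{e_i} \le n_j \le n$, so $|\pi(n_j)| \le \log_2 n$. Second, if a base of size $\le k$ exists then $N \le \prod_{j \in S} n_j \le n^k$, whence $s \le \log_2 N \le k \log_2 n$; if the input violates this bound I reject immediately.

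For the FPT step I plan to combine kernelization with bounded search. First remove dominated cycles: if $\pi(n_j) \subseteq \pi(n_{j'})$ with $n_j \ne n_{j'}$, then $n_j$ can be discarded, leaving an antichain of profiles in the divisor lattice of $N$. Then iteratively force any prime $p_i$ whose $C_i$ (after kernelization) leaves a unique useful cycle, and branch on a carefully chosen prime, exploiting the multiplicative constraint $\prod_{i \in \pi(n_j)} p_i^{e_i} \le n$ to control the branching factor. The main obstacle is that hitting set parameterized by solution size is $W[2]$-hard in general, so the argument must genuinely exploit the number-theoretic structure --- the fact that the ``sets'' are not arbitrary subsets of $[s]$ but correspond to divisors of $N$ bounded above by $n$ --- to produce either a kernel of size $f(k)$ or a depth-$k$ search tree of branching factor $f(k)$. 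Converting the bound $s \le k\log_2 n$ into a genuine $f(k)\cdot \poly(n)$ running time, rather than the naive $n^{O(k)}$ from brute enumeration over subsets of the prime universe, is where the real work will lie.
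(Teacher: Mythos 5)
Your structural reduction is correct and matches the paper's in substance: your claim that $G_B=\langle g^{L(B)}\rangle$ with $L(B)$ the lcm of the lengths of the cycles that $B$ meets, and hence that a size-$k$ base exists iff some $\le k$ cycle indices $S$ satisfy $S\cap C_i\neq\emptyset$ for every maximal prime power $p_i^{e_i}\,\|\,|G|$, is exactly the hitting-set instance the paper arrives at (there phrased via a direct-product decomposition $G=H_1\times\cdots\times H_\ell$ into cyclic $p$-groups and sets $S_i$ of points lying on $p_i^{e_i}$-cycles). The rejection test $|G|\le n^k$ is also the same.

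The gap is in the second half: you correctly identify that solving this hitting-set instance in FPT time is ``where the real work will lie,'' but you only sketch a plan (removing dominated cycles, forced choices, branching on a prime) without carrying it out, so the proof is incomplete precisely at the step that makes the theorem true. The paper's resolution is much more elementary than your plan and you should be aware of it: a size-$k$ hitting set induces a $k$-coloring of the index set $\{1,\dots,s\}$ of the constraints (two constraints get the same color if they are hit by the same chosen point), so it suffices to enumerate all $k^{s}$ colorings and check, for each color class $I$, whether $\bigcap_{i\in I}C_i\neq\emptyset$. For this to work one needs a sharper bound on $s$ than the one you state: from $s\le \log_2 N\le k\log_2 n$ you only get $k^{s}\le n^{k\log_2 k}$, which is not FPT. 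The paper instead uses that the $p_i$ are \emph{distinct} primes, so $s!\le p_1\cdots p_s\le n^k$ and hence $s=O\bigl(\frac{k\lg n}{\lg\lg n}\bigr)$, giving $k^{s}= n^{O(k\lg k/\lg\lg n)}$; then a win--win case split finishes: if $k\lg k\le \lg\lg n$ this is $n^{O(1)}$, and otherwise $n\le 2^{k^{k}}$ so even brute force over all size-$k$ subsets of $[n]$ runs in time bounded by a function of $k$. Without this (or a completed version of your branching scheme, whose branching factor you never actually bound), the argument does not establish fixed parameter tractability.
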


\begin{proof}
  Let $G=\angle{S}\le S_n$ be a cyclic permutation group as instance
  for $k$-BASE. Using known polynomial-time algorithms
  \cite{Sims,Luk93} we can compute a decomposition of $G$ into a
  direct product of cyclic groups of prime power order.
\[
G=H_1\times H_2\times\ldots\times H_\ell
\]
where each $H_i$ is cyclic of prime power order. Let
$H_i=\angle{g_i}$, where the order of $g_i$, $o(g_i)=p_i^{e_i}, 1\le
i\le \ell$, where the $p_i$'s are all distinct. Notice that
$|G|=p_1^{e_1}p_2^{e_2}\ldots p_\ell^{e_\ell}$. We can assume $|G|\le n^k$,
Otherwise, $G$ does not have a size $k$ base and the algorithm can
reject the instance. Since 
\[
(\ell/e)^\ell\le \ell! \le  p_1p_2\cdots p_\ell  \le  n^k,
\]

it follows that $\ell=O(\frac{k\lg n}{\lg \lg n})$.

For each $g_i$, when we express it as a product of disjoint cycles
then the length of each such cycle is a power of $p_i$ that divides
$p_i^{e_i}$, and there is at least one cycle of length $p_i^{e_i}$.
Clearly, any base for $G$ must include at least one point of some
$p_i^{e_i}$-cycle (i.e.\ cycle of length $p_i^{e_i}$) of $g_i$, for
each $i$. Otherwise, the cyclic subgroup $H_i$ of $G$ will not become
identity when the points in the base are fixed. For each index
$i~:~1\le i\le \ell$, define the set of points
\[
S_i = \{\alpha\in[n]\mid \alpha \textrm{ is in some
}p_i^{e_i}~\textrm{cycle of }g_i\}.
\]

\noindent\textbf{Claim.}~ Let $B\subseteq [n]$ be a subset of size
$k$. Then $B$ is a base for $G$ if and only if $B$ is a hitting set
for the collection of sets $\{S_1,S_2,\ldots,S_\ell\}$.\\

\noindent\textbf{Proof of Claim.}~Clearly, it is a necessary
condition. Conversely, suppose $|B|=k$ and $B\cap S_i\ne \emptyset$
for each $i$. Consider the partition of $[n]$ into the orbits of $G$:
\[
[n] = \Omega_1\cup \Omega_2\cup \cdots \cup \Omega_r.
\]

For each $g_i$, a cycle of length $p_i^{e_i}$ in $g_i$ is wholly
contained in some orbit of $G$. Indeed, each orbit of $G$ must be a
union of a subset of cycles of $g_i$. Since $B\cap S_i\ne \emptyset$,
some $p_i^{e_i}$-cycle $C_i$ of $g_i$ will intersect $B$.

Assume, contrary to the claim, that there is a $g\in G_B$ such that
$g\ne 1$. We can write $g=g_1^{a_1}g_2^{a_2}\ldots g_\ell^{a_\ell}$
for nonnegative integers $a_i < p_i^{e_i}$. Suppose $g_j^{a_j}\ne 1$.
Then raising both sides of the equation $g=g_1^{a_1}g_2^{a_2}\ldots
g_\ell^{a_\ell}$ to the power $\frac{|G|}{p_j^{e_j}}$, we have
\[
g'=g^{\frac{|G|}{p_j^{e_j}}}=g_j^{\beta_j},
\]
where $\beta_j<p_j^{e_j}$. Moreover, $\beta_j=
\frac{|G|a_j}{p_j^{e_j}}(mod~ p_j^{e_j})$ is nonzero because $a_j\ne 0(mod~
p_j^{e_j})$ and $|G|/p_j^{e_j}$ does not have $p_j$ as factor.
 
By assumption, some $p_j^{e_j}$-cycle $C_j$ of $g_j$ intersects $B$.
Since $\beta_j$ is nonzero and strictly smaller than $p_j^{e_j}$, none
of the points of $C_j$ are fixed by $g_j^{\beta_j}$ which contradicts
the assumption that $g$ and hence $g'$ is in $G_B$. This proves
the claim.\\

We now explain the FPT algorithm. If $|G| > n^k$ then there is no base
of size $k$. Hence we can assume $|G|\le n^k$. As already observed,
$\ell =O(\frac{k\lg n}{\lg\lg n})$. Thus, we need to solve the
  $k$-hitting set problem for a collection of at most $O(\frac{k\lg
    n}{\lg\lg n})$ many sets $\{S_1,S_2,\ldots,S_\ell\}$. We can think
  of it as a problem of $k$-coloring the indices $\{1,2,\ldots,\ell\}$
  such that for each color class $I$ we have $\cap_{i\in I} S_i\ne
  \emptyset$ and we can pick any one point for each such
  intersection. Notice that there are at most $k^\ell=n^{\frac{k\lg
      k}{\lg\lg n}}$ many such colorings. Now, if $k\lg k\le \lg \lg
  n$ this number is bounded by $n^{O(1)}$ can we can cycle through all
  these $k$-colorings in polynomial time and find a good $k$-coloring
  if it exists. On the other hand, if $k\lg k>\lg \lg n$ then $n^k\le
  2^{k^{k+1}}$ which means the brute force search gives an FPT time
  bound.
\end{proof}

\subsection{Bounded Orbit Permutation Groups}

We give an FPT algorithm for another special case of the $k$-BASE
problem: Let $G=\angle{S}\le S_n$ such that $G$ has orbits of size
bounded by a fixed constant $b$. I.e.\ $[n]=\biguplus_{i=1}^m
\Omega_i$, where $|\Omega_i|\le b$ for each $i$. This is again an
interesting special case as the minimum base problem is NP-hard even
for orbits of size bounded by $8$ \cite[Theorem 3.2]{Bla}.

Suppose $G$ has a base $B=\{i_1,i_2,\ldots,i_k\}$ of size $k$. Then
$G$ has a pointwise stabilizer tower $G=G_0\ge G_1 \ge\ldots\ge
G_k=\{1\}$ obtained by successively fixing the points of $B$. More
precisely, $G_j$ is the subgroup of $G$ that pointwise fixes
$\{i_1,i_2,\ldots,i_j\}$. Now, $\frac{|G_{j-1}|}{|G_j|}$ is the orbit
size of the point $i_j$ in the group $G_{j-1}$. Furthermore, $b$ is
also a bound on this orbit size. Therefore, $|G|\le b^k$. Hence in
$b^kn^{O(1)}$ time we can list all elements of $G$. Let
$G=\{g_1,g_2,\ldots,g_N\}$, where $N\le b^k$, where $g_1$ is the
identity element.

For each $g_i\in G, i\ge 2$, let $S_i=\{j\in[n]\mid g_i(j)\ne j\}$
denote the nonempty subset of points not fixed by $g_i$. Then a subset
$B\subset [n]$ of size $k$ is a base for $G$ if and only if $B$ is a
hitting set for the collection $S_2,S_3,\ldots,S_N$. The next claim
is straightforward. 

\noindent\textbf{Claim.}~~There is a size $k$ hitting set contained in
$[n]$ for the sets $\{S_2,S_3\ldots,S_N\}$ if and only if there is a
  partition of $\{2,3,\ldots,N\}$ into $k$ parts $I_1,I_2,\ldots,I_k$
    such that $\cap_{j\in I_r}S_j\ne \emptyset$ for each
    $r=1,2,\ldots,k$.

As $N\le b^k$, the total number of $k$-partitions of
$\{2,3,\ldots,N\}$ is bounded by $k^N\le k^{b^k}$. We can generate
  them and check if any one of them yields a hitting set of size $k$
  by checking the condition in the above claim. The overall time taken
  by the algorithm is given by the FPT time bound $k^{b^k}n^{O(1)}$.
We have shown the following result.

\begin{theorem}
Let $G=\angle{S}\le S_n$ such that $G$ has orbits of size bounded by
$b$, be an instance for the $k$-BASE problem with $k$ as parameter.
Then the problem has an FPT algorithm of running time
$k^{b^k}n^{O(1)}$.
\end{theorem}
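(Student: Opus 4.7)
The plan is essentially to execute the sketch that precedes the theorem statement, with the three main ingredients being a size bound on $G$, explicit enumeration of $G$, and a reduction to hitting set solved by exhaustive partitioning.

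First I would verify the key size bound: if $G$ admits a base of size $k$, then $|G| \le b^k$. The argument is to fix any base $B = \{i_1,\ldots,i_k\}$ and consider the pointwise stabilizer tower $G = G_0 \ge G_1 \ge \cdots \ge G_k = \{1\}$, where $G_j$ fixes $\{i_1,\ldots,i_j\}$ pointwise. The index $[G_{j-1}:G_j]$ equals the size of the orbit of $i_j$ under $G_{j-1}$, which is bounded by the orbit of $i_j$ under $G$, hence by $b$. Multiplying these indices gives $|G| \le b^k$. So as a preprocessing step, I would compute $|G|$ (via Schreier–Sims) and reject if it exceeds $b^k$; otherwise I enumerate $G = \{g_1,\ldots,g_N\}$ with $N \le b^k$ in time $b^k n^{O(1)}$ using a strong generating set.

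Next I would reduce $k$-BASE to a hitting set instance. For each non-identity $g_i$ set $S_i = \{j \in [n] \mid g_i(j) \ne j\}$. By definition, $B$ is a base for $G$ iff $G_B = \{1\}$ iff no non-identity $g_i$ pointwise fixes $B$ iff $B \cap S_i \ne \emptyset$ for every $i \ge 2$. So the problem becomes: decide whether the collection $\{S_2,\ldots,S_N\}$ has a hitting set of size at most $k$.

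Finally I would solve this hitting set instance by the partition trick stated in the claim. A hitting set of size $k$ exists iff the indices $\{2,\ldots,N\}$ admit a partition into $k$ parts $I_1,\ldots,I_k$ with $\bigcap_{j \in I_r} S_j \ne \emptyset$ for each $r$: given a hitting set $\{x_1,\ldots,x_k\}$ one puts $j$ into $I_r$ for some $r$ with $x_r \in S_j$, and conversely one chooses any element from each nonempty common intersection. The number of such partitions is at most $k^{N-1} \le k^{b^k}$, and for each partition the intersection check takes polynomial time, giving the overall bound $k^{b^k} n^{O(1)}$. I do not anticipate a major obstacle here: the content of the argument is really the size bound $|G|\le b^k$, and the rest is a direct enumeration combined with the standard colouring/partition reformulation of hitting set.
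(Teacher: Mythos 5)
Your proposal matches the paper's proof essentially step for step: the same stabilizer-tower argument giving $|G|\le b^k$, the same explicit enumeration of $G$, the same reformulation as a hitting set problem for the supports $S_i$ of non-identity elements, and the same exhaustive search over $k$-partitions of the index set. The only difference is that you spell out the proof of the partition/hitting-set equivalence, which the paper states as a straightforward claim.
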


\section{Concluding Remarks}

The impact of parameterized complexity on algorithmic graph theory
research, especially its interplay with graph minor theory, has been
very fruitful in the last two decades. This motivates the study of
parameterized complexity questions in other algorithmic problem
domains like, for example, group-theoretic computation. To this end,
we considered parameterized versions of two well-known classical
problems on permutation groups. We believe that a similar study of
other permutation group problems can be a worthwhile direction.

\newcommand{\etalchar}[1]{$^{#1}$}
\providecommand{\bysame}{\leavevmode\hbox to3em{\hrulefill}\thinspace}
\providecommand{\MR}{\relax\ifhmode\unskip\space\fi MR }
\providecommand{\MRhref}[2]{%
  \href{http://www.ams.org/mathscinet-getitem?mr=#1}{#2}
}
\providecommand{\href}[2]{#2}

\providecommand{\ON}{\ensuremath{\mathcal{O}}}
  \providecommand{\class}[1]{\textsf{\upshape{#1}}}
  \providecommand{\problem}[1]{\textsc{#1}}
  \csname@ifundefined\endcsname{iflongjournalnames}{\newif
  \iflongjournalnames\longjournalnamestrue}\relax
  \providecommand{\journalname}[2]{\iflongjournalnames #1\else #2\fi}
  \providecommand{\proceedings}[3]{\iflongjournalnames Proceedings of #1 #2
  (#3)\else Proc. #1 #3\fi}

\end{document}